\numberwithin{equation}{section}
\newtheorem{theorem}{Theorem}
\newtheorem{condition}{Condition}
\newtheorem{corollary}{Corollary}
\newtheorem{lemma}{Lemma}
\theoremstyle{remark}
\newtheorem{remark}{Remark}
\renewcommand{\qed}{\hfill \ensuremath{\blacksquare}}
\providecommand{\BOXEDSPECIAL}[4]{\hbox to #2{\raise #3\hbox to #2{\null #1\hfil}}}
\chardef\@x10\chardef\@xv60
\def\tcitime{
\def\@time{%
  \@minute\time\@hour\@minute\divide\@hour\@xv
  \ifnum\@hour<\@x 0\fi\the\@hour:%
  \multiply\@hour\@xv\advance\@minute-\@hour
  \ifnum\@minute<\@x 0\fi\the\@minute
  }}%
\def\QCTOpt[#1]#2{%
  \def\QCTOptB{#1}
  \def\QCTOptA{#2}
}
\def\QCTNOpt#1{%
  \def\QCTOptA{#1}
  \let\QCTOptB\empty
}
\def\Qct{%
  \@ifnextchar[{%
    \QCTOpt}{\QCTNOpt}
}
\def\QCBOpt[#1]#2{%
  \def\QCBOptB{#1}
  \def\QCBOptA{#2}
}
\def\QCBNOpt#1{%
  \def\QCBOptA{#1}
  \let\QCBOptB\empty
}
\def\Qcb{%
  \@ifnextchar[{%
    \QCBOpt}{\QCBNOpt}
}
\def\PrepCapArgs{%
  \ifx\QCBOptA\empty
    \ifx\QCTOptA\empty
      {}%
    \else
      \ifx\QCTOptB\empty
        {\QCTOptA}%
      \else
        [\QCTOptB]{\QCTOptA}%
      \fi
    \fi
  \else
    \ifx\QCBOptA\empty
      {}%
    \else
      \ifx\QCBOptB\empty
        {\QCBOptA}%
      \else
        [\QCBOptB]{\QCBOptA}%
      \fi
    \fi
  \fi
}
\def\GRAPHICSPS#1{%
 \ifcase\GRAPHICSTYPE
   \special{ps: #1}%
 \or
   \special{language "PS", include "#1"}%
 \fi
}%
\def\graffile#1#2#3#4#5{%
    \bgroup
    \leavevmode
    \@ifundefined{bbl@deactivate}{\def~{\string~}}{\activesoff}
    \raise -#4 \BOXTHEFRAME{%
       \BOXEDSPECIAL{#1}{#2}{#3}{#5}}%
    \egroup
}%
\def\draftbox#1#2#3#4{%
 \leavevmode\raise -#4 \hbox{%
  \frame{\rlap{\protect\tiny #1}\hbox to #2%
   {\vrule height#3 width\z@ depth\z@\hfil}%
  }%
 }%
}%
\newif\ifwasdraft
\def\GRAPHIC#1#2#3#4#5{%
 \ifnum\draft=\@ne\draftbox{#2}{#3}{#4}{#5}%
  \else\graffile{#1}{#3}{#4}{#5}{#2}%
  \fi
 }%
\def\addtoLaTeXparams#1{%
    \edef\LaTeXparams{\LaTeXparams #1}}%
\newif\ifBoxFrame \BoxFramefalse
\newif\ifOverFrame \OverFramefalse
\newif\ifUnderFrame \UnderFramefalse
\def\BOXTHEFRAME#1{%
   \hbox{%
      \ifBoxFrame
         \frame{#1}%
      \else
         {#1}%
      \fi
   }%
}
\def\doFRAMEparams#1{\BoxFramefalse\OverFramefalse\UnderFramefalse\readFRAMEparams#1\end}%
\def\readFRAMEparams#1{%
 \ifx#1\end%
  \let\next=\relax
  \else
  \ifx#1i\dispkind=\z@\fi
  \ifx#1d\dispkind=\@ne\fi
  \ifx#1f\dispkind=\tw@\fi
  \ifx#1t\addtoLaTeXparams{t}\fi
  \ifx#1b\addtoLaTeXparams{b}\fi
  \ifx#1p\addtoLaTeXparams{p}\fi
  \ifx#1h\addtoLaTeXparams{h}\fi
  \ifx#1X\BoxFrametrue\fi
  \ifx#1O\OverFrametrue\fi
  \ifx#1U\UnderFrametrue\fi
  \ifx#1w
    \ifnum\draft=1\wasdrafttrue\else\wasdraftfalse\fi
    \draft=\@ne
  \fi
  \let\next=\readFRAMEparams
  \fi
 \next
 }%
\def\IFRAME#1#2#3#4#5#6{%
      \bgroup
      \let\QCTOptA\empty
      \let\QCTOptB\empty
      \let\QCBOptA\empty
      \let\QCBOptB\empty
      #6%
      \parindent=0pt%
      \leftskip=0pt
      \rightskip=0pt
      \setbox0 = \hbox{\QCBOptA}%
      \@tempdima = #1\relax
      \ifOverFrame
          \typeout{This is not implemented yet}%
          \show\HELP
      \else
         \ifdim\wd0>\@tempdima
            \advance\@tempdima by \@tempdima
            \ifdim\wd0 >\@tempdima
               \textwidth=\@tempdima
               \setbox1 =\vbox{%
                  \noindent\hbox to \@tempdima{\hfill\GRAPHIC{#5}{#4}{#1}{#2}{#3}\hfill}\\%
                  \noindent\hbox to \@tempdima{\parbox[b]{\@tempdima}{\QCBOptA}}%
               }%
               \wd1=\@tempdima
            \else
               \textwidth=\wd0
               \setbox1 =\vbox{%
                 \noindent\hbox to \wd0{\hfill\GRAPHIC{#5}{#4}{#1}{#2}{#3}\hfill}\\%
                 \noindent\hbox{\QCBOptA}%
               }%
               \wd1=\wd0
            \fi
         \else
            \ifdim\wd0>0pt
              \hsize=\@tempdima
              \setbox1 =\vbox{%
                \unskip\GRAPHIC{#5}{#4}{#1}{#2}{0pt}%
                \break
                \unskip\hbox to \@tempdima{\hfill \QCBOptA\hfill}%
              }%
              \wd1=\@tempdima
           \else
              \hsize=\@tempdima
              \setbox1 =\vbox{%
                \unskip\GRAPHIC{#5}{#4}{#1}{#2}{0pt}%
              }%
              \wd1=\@tempdima
           \fi
         \fi
         \@tempdimb=\ht1
         \advance\@tempdimb by \dp1
         \advance\@tempdimb by -#2%
         \advance\@tempdimb by #3%
         \leavevmode
         \raise -\@tempdimb \hbox{\box1}%
      \fi
      \egroup%
}%
\def\DFRAME#1#2#3#4#5{%
 \begin{center}
     \let\QCTOptA\empty
     \let\QCTOptB\empty
     \let\QCBOptA\empty
     \let\QCBOptB\empty
     \ifOverFrame 
        #5\QCTOptA\par
     \fi
     \GRAPHIC{#4}{#3}{#1}{#2}{\z@}
     \ifUnderFrame 
        \nobreak\par\nobreak#5\QCBOptA
     \fi
 \end{center}%
 }%
\def\FFRAME#1#2#3#4#5#6#7{%
 \begin{figure}[#1]%
  \let\QCTOptA\empty
  \let\QCTOptB\empty
  \let\QCBOptA\empty
  \let\QCBOptB\empty
  \ifOverFrame
    #4
    \ifx\QCTOptA\empty
    \else
      \ifx\QCTOptB\empty
        \caption{\QCTOptA}%
      \else
        \caption[\QCTOptB]{\QCTOptA}%
      \fi
    \fi
    \ifUnderFrame\else
      \label{#5}%
    \fi
  \else
    \UnderFrametrue%
  \fi
  \begin{center}\GRAPHIC{#7}{#6}{#2}{#3}{\z@}\end{center}%
  \ifUnderFrame
    #4
    \ifx\QCBOptA\empty
      \caption{}%
    \else
      \ifx\QCBOptB\empty
        \caption{\QCBOptA}%
      \else
        \caption[\QCBOptB]{\QCBOptA}%
      \fi
    \fi
    \label{#5}%
  \fi
  \end{figure}%
 }%
\def\makeactives{
  \catcode`\"=\active
  \catcode`\;=\active
  \catcode`\:=\active
  \catcode`\'=\active
  \catcode`\~=\active
}
   \gdef\activesoff{%
      \def"{\string"}
      \def;{\string;}
      \def:{\string:}
      \def'{\string'}
    }
\def\FRAME#1#2#3#4#5#6#7#8{%
 \bgroup
 \ifnum\draft=\@ne
   \wasdrafttrue
 \else
   \wasdraftfalse%
 \fi
 \def\LaTeXparams{}%
 \dispkind=\z@
 \def\LaTeXparams{}%
 \doFRAMEparams{#1}%
 \ifnum\dispkind=\z@\IFRAME{#2}{#3}{#4}{#7}{#8}{#5}\else
  \ifnum\dispkind=\@ne\DFRAME{#2}{#3}{#7}{#8}{#5}\else
   \ifnum\dispkind=\tw@
    \edef\@tempa{\noexpand\FFRAME{\LaTeXparams}}%
    \@tempa{#2}{#3}{#5}{#6}{#7}{#8}%
    \fi
   \fi
  \fi
  \ifwasdraft\draft=1\else\draft=0\fi{}%
  \egroup
 }%
\def\TEXUX#1{"texux"}
\long\def\QQQ#1#2{%
     \long\expandafter\def\csname#1\endcsname{#2}}%
\long\def\QQA#1#2{}%
\newcommand{\QTR}[2]{\csname text#1\endcsname{#2}}
\def\EXPAND#1[#2]#3{}%
\def\NOEXPAND#1[#2]#3{}%
\def\LaTeXparent#1{}%
\def\ChildStyles#1{}%
\def\ChildDefaults#1{}%
\def\QTagDef#1#2#3{}%
  \providecommand{\UNICODE}[2][]{}
\def\QQfnmark#1{\footnotemark}
 \def\abstract{%
  \if@twocolumn
   \section*{Abstract (Not appropriate in this style!)}%
   \else \small 
   \begin{center}{\bf Abstract\vspace{-.5em}\vspace{\z@}}\end{center}%
   \quotation 
   \fi
  }%
   \def\registered{\relax\ifmmode{}\r@gistered
                    \else$\m@th\r@gistered$\fi}%
 \def\r@gistered{^{\ooalign
  {\hfil\raise.07ex\hbox{$\scriptstyle\rm\text{R}$}\hfil\crcr
  \mathhexbox20D}}}}{}%
\newdimen\theight
\def\Column{%
 \vadjust{\setbox\z@=\hbox{\scriptsize\quad\quad tcol}%
  \theight=\ht\z@\advance\theight by \dp\z@\advance\theight by \lineskip
  \kern -\theight \vbox to \theight{%
   \rightline{\rlap{\box\z@}}%
   \vss
   }%
  }%
 }%
\def\qed{%
 \ifhmode\unskip\nobreak\fi\ifmmode\ifinner\else\hskip5\p@\fi\fi
 \hbox{\hskip5\p@\vrule width4\p@ height6\p@ depth1.5\p@\hskip\p@}%
 }%
\def\miss{\hbox{\vrule height2\p@ width 2\p@ depth\z@}}%
\def\tcol#1{{\baselineskip=6\p@ \vcenter{#1}} \Column}  %
\def\newfmtname{LaTeX2e}
  \DeclareOldFontCommand{\rm}{\normalfont\rmfamily}{\mathrm}
  \DeclareOldFontCommand{\sf}{\normalfont\sffamily}{\mathsf}
  \DeclareOldFontCommand{\tt}{\normalfont\ttfamily}{\mathtt}
  \DeclareOldFontCommand{\bf}{\normalfont\bfseries}{\mathbf}
  \DeclareOldFontCommand{\it}{\normalfont\itshape}{\mathit}
  \DeclareOldFontCommand{\sl}{\normalfont\slshape}{\@nomath\sl}
  \DeclareOldFontCommand{\sc}{\normalfont\scshape}{\@nomath\sc}
  \newcounter{equationnumber}  
  \def\mathletters{%
     \addtocounter{equation}{1}
     \edef\@currentlabel{\theequation}%
     \setcounter{equationnumber}{\c@equation}
     \setcounter{equation}{0}%
     \edef\theequation{\@currentlabel\noexpand\alph{equation}}%
  }
    \def\BibTeX{{\rm B\kern-.05em{\sc i\kern-.025em b}\kern-.08em
                 T\kern-.1667em\lower.7ex\hbox{E}\kern-.125emX}}}{}%
\def\AmS{{\protect\usefont{OMS}{cmsy}{m}{n}%
                A\kern-.1667em\lower.5ex\hbox{M}\kern-.125emS}}}{}%
\def\@@eqncr{\let\@tempa\relax
    \ifcase\@eqcnt \def\@tempa{& & &}\or \def\@tempa{& &}%
      \else \def\@tempa{&}\fi
     \@tempa
     \if@eqnsw
        \iftag@
           \@taggnum
        \else
           \@eqnnum\stepcounter{equation}%
        \fi
     \fi
     \global\tag@false
     \global\@eqnswtrue
     \global\@eqcnt\z@\cr}
\def\TCItag{\@ifnextchar*{\@TCItagstar}{\@TCItag}}
\def\@TCItag#1{%
    \global\tag@true
    \global\def\@taggnum{(#1)}}
\def\@TCItagstar*#1{%
    \global\tag@true
    \global\def\@taggnum{#1}}
\begin{document}

\title{High Dimensional Classification through $\ell _{0}$-Penalized
Empirical Risk Minimization\thanks{{\footnotesize This work was
supported in part by the Ministry of Science and Technology, Taiwan
(MOST105-2410-H-001-003-), Academia Sinica (AS-CDA-106-H01),  the European
Research Council (ERC-2014-CoG-646917-ROMIA), and the UK Economic and Social
Research Council (ESRC) through research grant (ES/P008909/1) to the CeMMAP.}%
}}
\author{Le-Yu Chen\thanks{%
E-mail: lychen@econ.sinica.edu.tw} \\
{\small {Institute of Economics, Academia Sinica}} \and Sokbae Lee\thanks{%
E-mail: sl3841@columbia.edu} \\
{\small {Department of Economics, Columbia University}}\\
{\small {Centre for Microdata Methods and Practice, Institute for Fiscal
Studies} }}
\date{23 November 2018}
\maketitle

\begin{abstract}
We consider a high dimensional binary classification problem and construct a classification procedure by minimizing the empirical misclassification risk with a penalty on the number of selected features. We derive non-asymptotic probability bounds on the estimated sparsity as well as on the excess misclassification risk. In particular, we show that our method yields a sparse solution whose $\ell_0$-norm can be arbitrarily close to true sparsity with high probability and obtain the rates of convergence for the excess misclassification risk. The proposed procedure is implemented via  the method of mixed integer linear programming. Its numerical performance is illustrated in Monte Carlo experiments.

\bigskip

\noindent
\textbf{Keywords}: feature selection,
penalized estimation, mixed integer
optimization, finite sample property

\bigskip

\pagebreak
\end{abstract}

\section{Introduction\label{Sec:Introduction}}

Binary classification is concerned with learning a binary classifier that
can be used to categorize objects into one of two predefined statuses. It
arises in a wide range of applications and has been extensively studied in
the statistics and machine learning literature. For comprehensive surveys
and discussions on the binary classification methods, see e.g. %
\citet{DGL:1996}, \citet{vapnik2000}, \citet{Lugosi:2002}, %
\citet{boucheron2005} and \citet{hastie2009}. Solving for the optimal binary
classifier by minimizing the empirical misclassification risk is known as an
empirical risk minimization (ERM) problem. 
There has been a massive research
interest in high dimensional classification problems where the dimension of
the feature vector used to classify the object's label can be comparable
with or even larger than the available training sample size. It is known
(see e.g., \citet{bickel2004} and \citet{fan2008}) that working directly
with a high dimensional feature space can result in poor classification
performance. To overcome this, it is often assumed that only a small subset
of features are important for classification and feature selection is
performed to mitigate the high dimensionality problem. See \citet{fan2011}
for an overview on the issues and methods for high dimensional
classification.

In this paper, we study the ERM based binary classification in the setting
with high dimensional vectors of features. We propose an $\ell _{0}$-penalized ERM procedure for classification by minimizing over a class of
linear classifiers the empirical misclassification risk with a penalty on
the number of selected features.
Here, the $\ell _{0}$-norm of a real vector refers to the number of non-zero
components of the vector.
 When the Bayes classifier, which is the optimal
classifier that minimizes the population misclassification risk, is also of
the linear classifier form and respects a sparsity condition, we show that
this penalized ERM classification approach can yield a sparse solution for
feature selection with high probability. Moreover, we derive non-asymptotic
bound on the excess misclassification risk and establish its rate of
convergence.

There are alternative ERM based approaches to the high dimensional binary
classification problem. \citet{Greenshtein2006}, \citet{Jiang:10} and %
\citet{chen2018} studied the best subset variable selection approach where
the ERM problem is solved subject to a constraint on a pre-specified maximal
number of selected features. \citet{Jiang:10} further showed that the best
subset ERM problem can be approximated by the $\ell _{0}$-penalized ERM
problem. However, they did not establish theoretical results characterizing
the size of the subset of feature variables selected under the $\ell _{0}$-penalized ERM approach. Neither did they provide numerical algorithms
for solving the $\ell _{0}$-penalized estimation problem.

In the present paper, we take the $\ell _{0}$-penalized ERM approach
and develop a computational method for implementation. We show that there is
a high probability in large samples that the resulting number of selected
features under this penalized estimation approach can be capped above by an
upper bound which can be made arbitrarily close to the unknown smallest
number of features that are relevant for classification. Our penalized ERM
approach is also closely related to the method of structural risk
minimization (see, e.g., \citet[][Chapter 18]{DGL:1996}) where the best
classifier is selected by solving a sequence of penalized ERM problems over
an increasing sequence of spaces of classifiers with the penalty depending
on the complexity of the classifier space measured in terms of the
Vapnik-Chervonenkis (VC) dimension. As will be discussed later, our approach
can also be interpreted in a similar fashion yet with a different type of
complexity penalty.

For implementation, we show that the $\ell _{0}$-penalized ERM problem
of this paper can be equivalently reformulated as a mixed integer linear
programming (MILP) problem. This reformulation enables us to employ modern
efficient MIO solvers to solve our penalized ERM problem. Well-known
numerical solvers such as CPLEX and Gurobi can be used to effectively solve
large-scale MILP problems. See \citet{Nemhauser1999} and %
\citet{bertsimas2005} for classic texts on the MIO theory and applications.
See also \citet{junger2009}, \citet{achterberg2013} and 
\citet[Section
2.1]{bertsimas2016} for discussions on computational advances in solving the
MIO problems.

The present paper is organized as follows. In Section \ref{Sec:ERM}, we
describe the binary classification problem and set forth the $\ell _{0}$-penalized ERM approach. In Section \ref{Sec:Theoretical Properties},
we establish theoretical properties of the proposed classification approach.
In Section \ref{Sec:MIO}, we provide a computational method using the MIO
approach. In Section \ref{Sec:Simulation}, we conduct a simulation study on
the performance of the $\ell _{0}$-penalized ERM approach in high
dimensional binary classification problems. We then conclude the paper in
Section \ref{Sec:Conclusions}. Proofs of all theoretical results of the
paper are collated in Appendix \ref{Sec:Appendix}.

\section{An $\ell_0$-Penalized ERM Approach\label{Sec:ERM}}

Let $Y\in \{0,1\}$ be the binary label or outcome of an object and $X$ a $%
\left( p+1\right) $ dimensional feature vector of that object. Write $%
X=(X_{1},\widetilde{X})$, where 
$X_{1}$ is a scalar random variable that  is always included and has a positive effect and 
$\widetilde{X}$ is the $p$ dimensional
subvector of $X$ subject to feature selection.
 For $x\in 
\mathcal{X}$, let 
\begin{equation}
b_{\theta }(x)\equiv 1\left\{ x_{1}+\widetilde{x}^{\prime }\theta \geq
0\right\} ,  \label{predictor}
\end{equation}%
where $\mathcal{X}$ is the support of $X$, $\theta $ is a vector of
parameters, and $1\left\{ \cdot \right\} $ is an indicator function that
takes value 1 if its argument is true and 0 otherwise.

We consider binary classification using linear classifiers of the form (\ref%
{predictor}). Since the condition $X_{1}+\widetilde{X}^{\prime }\theta \geq
0 $ is invariant with respect to any positive scalar that multiplies both
sides of this inequality, working with the classifier (\ref{predictor})
amounts to normalizing the scale by setting the coefficient of $X_{1}$ to be
unity. For any $p$ dimensional real vector $\theta $, let $\left\Vert \theta
\right\Vert _{0}\equiv \sum\nolimits_{j=1}^{p}1\{\theta _{j}\neq 0\}$ be the 
$\ell _{0}$-norm of $\theta $. Assume that the researcher has a training
sample of $n$ independent identically distributed (i.i.d.) observations $%
\left( Y_{i},X_{i}\right) _{i=1}^{n}$ of $(Y,X)$. We allow the dimension $p$
to be potentially much larger than the sample size $n$. We estimate the
coefficient vector $\theta $ by solving the following $\ell _{0}$-penalized minimization problem%
\begin{equation}
\min\nolimits_{\theta \in \Theta }\text{ }S_{n}(b_{\theta })+\lambda
\left\Vert \theta \right\Vert _{0},  \label{penalized ERM}
\end{equation}%
where $\Theta \subset \mathbb{R}^{p}$ denotes the parameter space, and, for
any indicator function $b:\mathcal{X}\mapsto \left\{ 0,1\right\} $, 
\begin{equation}
S_{n}(b)\equiv \frac{1}{n}\sum\nolimits_{i=1}^{n}1\{Y_{i}\neq b(X_{i})\},
\label{Sn}
\end{equation}%
and $\lambda $ is a given non-negative tuning parameter of the penalized
minimization problem.

The function $S_{n}(b)$ is known as the empirical misclassification risk for
the binary classifier $b$. Minimization of $S_{n}(b)$ over the class of
binary classifiers given by (\ref{predictor}) is known as an empirical risk
minimization (ERM) problem. The penalized ERM approach (\ref{penalized ERM})
enforces dimension reduction by attaching a higher penalty to a classifier $%
b_{\theta }$ which uses more object features for classification. Let $%
\widehat{\theta }$ be a solution to the minimization problem (\ref{penalized
ERM}). We shall refer to the resulting classifier (\ref{predictor})
evaluated at $\widehat{\theta }$ as an $\ell _{0}$-penalized ERM classifier.

For any $m\geq 0$, let%
\begin{equation}
\mathcal{B}_{m}\mathcal{\equiv }\left\{ b_{\theta }:\theta \in \Theta
_{m}\right\}  \label{Bq}
\end{equation}%
where 
\begin{equation}
\Theta _{m}\equiv \{\theta \in \Theta :\left\Vert \theta \right\Vert
_{0}\leq m\}.  \label{theta_r}
\end{equation}%
That is, $\mathcal{B}_{m}$ is the class of all linear classifiers in (\ref%
{predictor}) whose $\theta $ vector has no more than $m$ non-zero
components. For $m\in \{0,1,...,p\}$, let 
\begin{equation}
S_{n}^{C}\left( m\right) \equiv \min\nolimits_{b\in \mathcal{B}_{m}}S_{n}(b).
\label{constrained ERM}
\end{equation}%
Then it is straightforward to see that the minimized objective value of the
penalized ERM problem (\ref{penalized ERM}) is equivalent to that of the
problem%
\begin{equation*}
\min\nolimits_{m\in \{0,1,...,p\}}S_{n}^{C}\left( m\right) +\lambda m.
\end{equation*}%
In other words, our  approach is akin to the
method of structural risk minimization as it amounts to solving ERM problems
over an increasing sequence of classifier spaces $\mathcal{B}_{m}$ which
carries a complexity penalty $\lambda m$. In the next section, we will set
forth regularity conditions on the penalty tuning parameter $\lambda $ and
establish theoretical properties for our classification approach.

\section{Theoretical Properties\label%
{Sec:Theoretical Properties}}

In this section, we study theoretical properties of the $\ell _{0}$-penalized ERM classification approach. Let $F$ denote the joint distribution
of $\left( Y,X\right) $. For any indicator function $b:\mathcal{X}\mapsto
\left\{ 0,1\right\} $, let 
\begin{equation}
S(b)\equiv P\left( Y\neq b(X)\right) .  \label{S(a,b,c)}
\end{equation}%
For $x\in \mathcal{X}$, let%
\begin{align}
\eta (x)& \equiv P(Y=1|X=x),  \label{eta_w} \\
b^{\ast }(x)& \equiv 1\left\{ \eta (x)\geq 0.5\right\} .  \label{b_star}
\end{align}%
For any measurable function $f:\mathcal{W\mapsto }\mathbb{R}$, let $%
\left\Vert f\right\Vert _{1}=E\left[ \left\vert f(W)\right\vert \right] $
denote the $L_{1}$-norm of $f$. The functions $\eta $ and $b^{\ast }$ as
well as the $L_{1}$-norm $\left\Vert \cdot \right\Vert _{1}$ depend on the
data generating distribution $F$. It is straightforward to see that, for any
binary classifier $b$,%
\begin{equation}
S(b)-S(b^{\ast })=E\left[ \left\vert 2\eta (W)-1\right\vert \left\vert
b^{\ast }(W)-b(W)\right\vert \right]  \label{excess risk}
\end{equation}%
so that $S(b)$ is minimized at $b=b^{\ast }$. The optimal classifier $%
b^{\ast }$ is known as the Bayes classifier in the classification literature.

We assess the predictive performance of the $\ell _{0}$-penalized ERM
approach by bounding the excess risk%
\begin{equation}
U_{n}\equiv S(b_{\widehat{\theta }})-S(b^{\ast }).  \label{Un}
\end{equation}%
The difference $U_{n}$ is non-negative by (\ref{excess risk}). Hence, a good
classifier will result in a small value of $U_{n}$ with a high probability
and also on average.

We impose the following assumption.

\begin{condition}
\label{sparsity}For every data generating distribution $F$, there is a
non-negative integer $d$, which may depend on $F$, such that $d\leq p$ and $%
b^{\ast }\in \mathcal{B}_{d}$.
\end{condition}

Let $q$ denote the smallest value of non-negative integers $d$ satisfying $%
b^{\ast }\in \mathcal{B}_{d}$. By Condition \ref{sparsity}, such $q$ value
is finite and always exists. Condition \ref{sparsity} implies that the Bayes
classifier $b^{\ast }$ admits a linear threshold crossing structure in the
sense that the equivalence%
\begin{equation*}
\eta (X)\geq 0.5\Longleftrightarrow X_{1}+\widetilde{X}^{\prime }\theta \geq
0
\end{equation*}%
holds almost surely for some $\theta \in \Theta _{q}$, where $q$ can be
interpreted as the sparsity parameter associated with $b^{\ast }$, which is
unknown in this binary classification problem. Moreover, the assumption that 
$q\leq p$ implies that the feature vector $(X_1, \widetilde{X})$ is rich enough to
embody those relevant ones for constructing the Bayes classifier.

For any two real numbers $x$ and $y$, let $x\vee y\equiv \max \{x,y\}$ and $%
x\wedge y\equiv \min \{x,y\}$. For any $x\geq 0$, let $\left\lceil
x\right\rceil $ and $\left\lfloor x\right\rfloor $ respectively denote the
integer ceiling and floor of $x$. We impose the following condition on the
growing rates of $\lambda $ relative to the sample size.

\begin{condition}
\label{lamda}$\lambda =c\sqrt{n^{-1}\ln (p\vee n)}$ for some constant $c>0$.
\end{condition}

Let 
\begin{eqnarray}
m_{0} &\equiv &q\vee (p\wedge \left\lfloor \lambda ^{-1}\right\rfloor ),
\label{m0} \\
r_{n} &\equiv &q\ln (p\vee n).  \label{rn}
\end{eqnarray}%
The estimate $\left\Vert \widehat{\theta }\right\Vert _{0}$ corresponds to
the number of features selected under the $\ell _{0}$-penalized ERM
approach. We now provide a result on the statistical behavior of $\left\Vert 
\widehat{\theta }\right\Vert _{0}$, which sheds lights on the dimension
reduction performance of our penalized estimation approach.

\begin{theorem}
\label{sparsity bound}Assume $q\geq 1$. Given Conditions \ref{sparsity} and %
\ref{lamda}, for all given $\sigma >0$ and $\epsilon \in (0,1)$, there is a
universal constant $M_{\sigma }$, which depends only on $\sigma $, such that 
\begin{equation}
P\left( \left\Vert \widehat{\theta }\right\Vert _{0}>s\right) \leq
j_{0}e^{-\sigma r_{n}}  \label{bound on the l0 norm of theta_hat}
\end{equation}%
where 
\begin{eqnarray}
s &\equiv &(1+\epsilon )q+\epsilon ,  \label{estimated sparsity} \\
j_{0} &\equiv &\left\lceil \frac{\ln \left( m_{0}\right) -\ln \left(
\epsilon \right) }{\left\vert \ln (2\sqrt{M_{\sigma }})-\ln (c)\right\vert }%
\right\rceil ,  \label{j0}
\end{eqnarray}%
provided that the constant $c$ in Condition \ref{lamda} is sufficiently large
such that 
\begin{equation}
c\geq 2\sqrt{M_{\sigma }}\left( 1+\epsilon \right) \epsilon ^{-1},
\label{condition on c}
\end{equation}%
and the inequality%
\begin{equation}
4\left( k+1\right) \ln \left( M_{\sigma }k\ln (p\vee n)\right) \leq k\ln
(p\vee n)+6\left( k+1\right) \ln 2  \label{inequality}
\end{equation}%
holds for any integer $k$ that satisfies%
\begin{equation*}
q\leq k\leq \left[ m_{0}\vee \left\lfloor s\right\rfloor \vee \left( \left(
j_{0}-1\right) q+\left\lfloor \sqrt{m_{0}}\right\rfloor \right) \right]
\wedge p.
\end{equation*}
\end{theorem}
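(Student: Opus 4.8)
The plan is to convert the event $\{\|\widehat{\theta}\|_{0}>s\}$ into a uniform empirical-deviation event and then control it by a peeling argument over a geometric grid of sparsity levels. Let $\theta^{\ast }\in \Theta_{q}$ satisfy $b_{\theta^{\ast }}=b^{\ast }$ almost surely, which exists because $b^{\ast }\in \mathcal{B}_{q}$ by Condition \ref{sparsity}. Since $\widehat{\theta}$ minimizes the penalized objective in (\ref{penalized ERM}), comparing its value with the value at $\theta^{\ast }$ gives $S_{n}(b_{\widehat{\theta}})+\lambda \|\widehat{\theta}\|_{0}\leq S_{n}(b^{\ast })+\lambda q$, hence $\lambda (\|\widehat{\theta}\|_{0}-q)\leq S_{n}(b^{\ast })-S_{n}(b_{\widehat{\theta}})$. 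Decomposing $S_{n}(b^{\ast })-S_{n}(b_{\widehat{\theta}})=[S_{n}(b^{\ast })-S(b^{\ast })]-[S_{n}(b_{\widehat{\theta}})-S(b_{\widehat{\theta}})]+[S(b^{\ast })-S(b_{\widehat{\theta}})]$, using $S(b^{\ast })\leq S(b_{\widehat{\theta}})$ from (\ref{excess risk}), and noting that on the event $\{\|\widehat{\theta}\|_{0}=k\}$ with $k\geq q$ both $b^{\ast }\in \mathcal{B}_{q}\subseteq \mathcal{B}_{k}$ and $b_{\widehat{\theta}}\in \mathcal{B}_{k}$, I obtain the basic inequality $\lambda (\|\widehat{\theta}\|_{0}-q)\leq 2Z_{k}$ on that event, where $Z_{k}\equiv \sup_{b\in \mathcal{B}_{k}}|S_{n}(b)-S(b)|$.

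Next I would record a deterministic a priori bound: since $S_{n}\geq 0$, the same comparison yields $\lambda \|\widehat{\theta}\|_{0}\leq S_{n}(b^{\ast })+\lambda q\leq 1+\lambda q$, so $\|\widehat{\theta}\|_{0}\leq q+\lfloor \lambda^{-1}\rfloor$, and trivially $\|\widehat{\theta}\|_{0}\leq p$; thus the admissible values of $k$ lie in a range of order $m_{0}$, which is why $m_{0}$ enters the bound. I then control $Z_{k}$ by a Vapnik--Chervonenkis uniform deviation inequality (as in \citet{DGL:1996}). The class $\mathcal{B}_{k}$ is the union, over the $\sum_{i=0}^{k}\binom{p}{i}$ choices of support of size at most $k$, of halfspace families of VC dimension at most $k+1$; hence its growth function, and so a tail bound of the form $P(Z_{k}>t)\leq G_{k}\exp (-c_{0}nt^{2})$ with an absolute constant $c_{0}$, carries a complexity exponent $\ln G_{k}$ of order $(k+1)\ln (M_{\sigma }k\ln (p\vee n))$ once absolute constants are absorbed into $M_{\sigma }$. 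The basic inequality gives $P(\|\widehat{\theta}\|_{0}=k)\leq P(Z_{k}\geq \tfrac{\lambda }{2}(k-q))$, and inequality (\ref{inequality}) is precisely the clean sufficient condition, imposed over the stated range of $k$, under which $\ln G_{k}$ is dominated by the deviation exponent at the governing threshold with enough margin that the resulting probability is at most $e^{-\sigma r_{n}}$.

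The decisive device is then a peeling argument that replaces the naive union bound over the $\sim m_{0}$ values of $k$ (which would only yield $m_{0}e^{-\sigma r_{n}}$) by a union over $j_{0}$ levels. I would introduce a grid $a_{0}<a_{1}<\cdots <a_{j_{0}}$ anchored at the base level $a_{0}\approx s$ (whose excess $s-q=\epsilon (q+1)$ is the smallest gain we can exploit) and rising, with geometric ratio $c/(2\sqrt{M_{\sigma }})$ between the associated deviation thresholds, up to the a priori cap of order $m_{0}$; the number of levels required is exactly $j_{0}$ as in (\ref{j0}), and (\ref{condition on c}) guarantees that this ratio exceeds one. On the event $\{\|\widehat{\theta}\|_{0}\in (a_{j-1},a_{j}]\}$, monotonicity of the classes gives $Z_{\|\widehat{\theta}\|_{0}}\leq Z_{a_{j}}$ while $\|\widehat{\theta}\|_{0}-q\geq a_{j-1}-q$, so the basic inequality forces $2Z_{a_{j}}\geq \lambda (a_{j-1}-q)$; a union bound over $j=1,\dots ,j_{0}$ together with the per-level tail bound of the previous step produces $P(\|\widehat{\theta}\|_{0}>s)\leq j_{0}e^{-\sigma r_{n}}$. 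The requirement that (\ref{inequality}) hold for every integer $k$ up to $[m_{0}\vee \lfloor s\rfloor \vee ((j_{0}-1)q+\lfloor \sqrt{m_{0}}\rfloor )]\wedge p$ is exactly what validates the per-level bound at each grid point, the top point being the largest level the grid reaches.

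I expect the main obstacle to be this peeling and complexity bookkeeping rather than any single estimate. The difficulty is to choose the grid (its ratio, base, and spacing) so that simultaneously only $j_{0}=O(\ln m_{0})$ levels are used and, at every level, the VC complexity exponent $\ln G_{a_{j}}$ is still beaten by the penalty gain $\lambda (a_{j-1}-q)$ so that each term is $e^{-\sigma r_{n}}$. The binding case is the first level, near $k=s$, where the gain $k-q$ is of the smallest order $\epsilon (q+1)$; balancing it against the VC complexity is what forces the lower bound $c\geq 2\sqrt{M_{\sigma }}(1+\epsilon )\epsilon ^{-1}$ on the tuning constant and pins down the precise form of $j_{0}$, and it is also why the statement assumes $q\geq 1$ (so that $r_{n}>0$ and the exponential bound is nontrivial).
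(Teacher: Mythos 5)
Your proposal is correct and follows essentially the same route as the paper: the same basic inequality $\lambda \left( \left\Vert \widehat{\theta }\right\Vert _{0}-q\right) \leq 2\sup_{b\in \mathcal{B}_{k}}\left\vert S_{n}(b)-S(b)\right\vert $ obtained by comparing the penalized objective at $\widehat{\theta }$ with a $q$-sparse competitor, the same deterministic cap of order $m_{0}$, the same VC-type uniform deviation lemma whose applicability is what inequality (\ref{inequality}) guarantees, and the same count of $j_{0}$ multi-scale levels governed by the ratio $c/(2\sqrt{M_{\sigma }})$ and condition (\ref{condition on c}). The only difference is presentational: you peel the event $\left\{ \left\Vert \widehat{\theta }\right\Vert _{0}>s\right\} $ into bands rising from $s$ toward $m_{0}$, whereas the paper iterates the contraction $m_{j}=q+2c^{-1}\sqrt{M_{\sigma }}\sqrt{m_{j-1}}$ downward from $m_{0}$ until it falls below $s$ — these are inverse descriptions of the same grid and yield the identical bound.
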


For any fixed $\epsilon \in (0,1)$, we can deduce from Theorem \ref{sparsity
bound} that $P\left( \left\Vert \widehat{\theta }\right\Vert _{0}>s\right)
\longrightarrow 0$ as $p\vee n\longrightarrow \infty $. Moreover, this
theorem implies that our approach is
effective in reducing the feature dimension in the sense that, with high
probability in large samples, the number of selected features is capped
above by the quantity (\ref{estimated sparsity}), which can be made
arbitrarily close to the true sparsity $q$ in the classification problem.
Specifically, if $\epsilon $ turns out to be smaller than $1/\left(
q+1\right) $, the result (\ref{bound on the l0 norm of theta_hat}) implies
that $P\left( \left\Vert \widehat{\theta }\right\Vert _{0}>q+1\right) $
tends to zero exponentially in $r_{n}$.

The next theorem characterizes the predictive performance of the $\ell _{0}$%
-penalized ERM approach.

\begin{theorem}
\label{convergence rate}Under the setup and assumptions stated in Theorem %
\ref{sparsity bound}, the following result holds: 
\begin{equation}
P\left( U_{n}>3\lambda s\right) \leq \left( 1+j_{0}\right) e^{-\sigma r_{n}}.
\label{bound on excess risk}
\end{equation}
\end{theorem}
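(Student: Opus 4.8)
The plan is to condition on the sparsity event delivered by Theorem~\ref{sparsity bound} and, on that event, to control the excess risk by combining the optimality of $\widehat{\theta}$ with a uniform empirical-process bound over the sparse classifier class. Throughout, write $\theta^{\ast}\in\Theta_{q}$ for a parameter with $b_{\theta^{\ast}}=b^{\ast}$ and $\Vert\theta^{\ast}\Vert_{0}=q$ (such $\theta^{\ast}$ exists by Condition~\ref{sparsity} and the definition of $q$), and set $A\equiv\{\Vert\widehat{\theta}\Vert_{0}\leq s\}$, so that $P(A^{c})\leq j_{0}e^{-\sigma r_{n}}$ by Theorem~\ref{sparsity bound}.

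First I would derive the basic decomposition. Because $\widehat{\theta}$ minimizes the penalized criterion,
\begin{equation*}
S_{n}(b_{\widehat{\theta}})+\lambda\Vert\widehat{\theta}\Vert_{0}\leq S_{n}(b^{\ast})+\lambda q,
\end{equation*}
so that $S_{n}(b_{\widehat{\theta}})-S_{n}(b^{\ast})\leq\lambda q-\lambda\Vert\widehat{\theta}\Vert_{0}\leq\lambda q$. Introducing the centred difference process
\begin{equation*}
\nu_{n}(b)\equiv\left[S(b)-S_{n}(b)\right]-\left[S(b^{\ast})-S_{n}(b^{\ast})\right],
\end{equation*}
which has mean zero and is built from increments $1\{Y_{i}\neq b(X_{i})\}-1\{Y_{i}\neq b^{\ast}(X_{i})\}\in\{-1,0,1\}$, the identity $U_{n}=[S_{n}(b_{\widehat{\theta}})-S_{n}(b^{\ast})]+\nu_{n}(b_{\widehat{\theta}})$ together with the display above gives $U_{n}\leq\lambda q+\nu_{n}(b_{\widehat{\theta}})$. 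On $A$ we have $b_{\widehat{\theta}}\in\mathcal{B}_{\lfloor s\rfloor}$, and since $s>q$ (as $q\geq1$ and $\epsilon>0$) the event $\{U_{n}>3\lambda s\}\cap A$ forces $\nu_{n}(b_{\widehat{\theta}})>3\lambda s-\lambda q\geq2\lambda s$. Hence it suffices to bound $P(\sup_{b\in\mathcal{B}_{\lfloor s\rfloor}}\nu_{n}(b)>2\lambda s)$.

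The main work, and the step I expect to be the chief obstacle, is this uniform deviation bound. I would control $\sup_{b\in\mathcal{B}_{m}}\nu_{n}(b)$ with $m=\lfloor s\rfloor$ using the same VC/concentration machinery underlying Theorem~\ref{sparsity bound}. Since any $b\in\mathcal{B}_{m}$ selects at most $m$ of the $p$ coordinates and is linear on the selected block, the growth (shattering) function obeys $\ln S_{\mathcal{B}_{m}}(n)\lesssim m\ln(p\vee n)$; symmetrization then yields $E[\sup_{b\in\mathcal{B}_{m}}\nu_{n}(b)]\lesssim\sqrt{m\ln(p\vee n)/n}$, and a bounded-difference concentration step (the increments defining $\nu_{n}$ lie in $\{-1,0,1\}$) supplies sub-Gaussian tails around this mean. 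The delicate part is the bookkeeping of constants so that the threshold $2\lambda s$ dominates both the expected supremum and the deviation needed for the target rate: with $m=\lfloor s\rfloor\leq s$ and $\lambda=c\sqrt{n^{-1}\ln(p\vee n)}$ one has $n(2\lambda s)^{2}\asymp c^{2}s^{2}\ln(p\vee n)$, which exceeds $\sigma r_{n}=\sigma q\ln(p\vee n)$ once $c$ is large (condition~(\ref{condition on c})), while inequality~(\ref{inequality}), assumed to hold at the admissible point $k=\lfloor s\rfloor$, ensures the complexity term $\lesssim\lfloor s\rfloor\ln(p\vee n)$ is absorbed. Using $\lfloor s\rfloor\geq q$ (which holds because $s>q$), these considerations yield
\begin{equation*}
P\!\left(\sup_{b\in\mathcal{B}_{\lfloor s\rfloor}}\nu_{n}(b)>2\lambda s\right)\leq e^{-\sigma r_{n}}.
\end{equation*}

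Finally I would assemble the pieces. Splitting on $A$,
\begin{equation*}
P(U_{n}>3\lambda s)\leq P\!\left(\sup_{b\in\mathcal{B}_{\lfloor s\rfloor}}\nu_{n}(b)>2\lambda s\right)+P(A^{c})\leq e^{-\sigma r_{n}}+j_{0}e^{-\sigma r_{n}}=(1+j_{0})e^{-\sigma r_{n}},
\end{equation*}
which is (\ref{bound on excess risk}). Here the summand $e^{-\sigma r_{n}}$ is the price of the single excess-risk empirical-process event and $j_{0}e^{-\sigma r_{n}}$ is inherited directly from the sparsity bound, which explains the factor $1+j_{0}$. The only genuinely new ingredient relative to Theorem~\ref{sparsity bound} is the deviation bound for $\sup_{\mathcal{B}_{\lfloor s\rfloor}}\nu_{n}$ at the scale $2\lambda s$; everything else is algebra and the reuse of~(\ref{condition on c})--(\ref{inequality}).
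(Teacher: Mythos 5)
Your proposal is correct and follows essentially the same route as the paper's proof: split on the sparsity event $\{\Vert\widehat{\theta}\Vert_{0}\leq s\}$ supplied by Theorem~\ref{sparsity bound}, use the optimality of $\widehat{\theta}$ against $b^{\ast}\in\mathcal{B}_{q}$ to reduce $\{U_{n}>3\lambda s\}$ to a uniform empirical-process deviation over $\mathcal{B}_{\lfloor s\rfloor}$, and pay $e^{-\sigma r_{n}}$ for that deviation plus $j_{0}e^{-\sigma r_{n}}$ inherited from the sparsity bound. The only difference is packaging: the paper writes $U_{n}\leq 2\Delta_{n}(\Vert\widehat{\theta}\Vert_{0}\vee q)+\lambda s$ with $\Delta_{n}(k)=\sup_{b\in\mathcal{B}_{k}}\vert S_{n}(b)-S(b)\vert$ and invokes Lemma~\ref{tail bound} at $k=\lfloor s\rfloor$, whereas your centered process satisfies $\nu_{n}(b)\leq 2\Delta_{n}(\lfloor s\rfloor)$ for $b\in\mathcal{B}_{\lfloor s\rfloor}$ (since $b^{\ast}\in\mathcal{B}_{q}\subseteq\mathcal{B}_{\lfloor s\rfloor}$), so the uniform bound you flag as the ``chief obstacle'' is immediate from Lemma~\ref{tail bound} together with $c\geq\sqrt{M_{\sigma}}$ and need not be re-derived from symmetrization.
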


Theorem \ref{convergence rate} implies that the tail probability of $U_{n}$
decays to zero exponentially in $r_{n}$. Moreover, inequality (\ref{bound on
excess risk}) together with the fact that $U_{n}\leq 1$ immediately implies
that 
\begin{equation}
E\left[ U_{n}\right] \leq \left( 1+j_{0}\right) e^{-\sigma r_{n}}+3\lambda s.
\label{bound on the mean excess risk}
\end{equation}%
By Condition \ref{lamda} and (\ref{estimated sparsity}), we can therefore
deduce that 
\begin{equation}
E\left[ U_{n}\right] =O\left( q\sqrt{n^{-1}\ln (p\vee n)}\right) ,
\label{theorem1-expectation}
\end{equation}%
which converges to zero whenever 
\begin{equation}
q^{2}\ln (p\vee n)=o(n).  \label{rate}
\end{equation}%
The rate condition (\ref{rate}) allows the case that 
\begin{equation}
\ln p=O(n^{\alpha })\text{ and }q=o(n^{1/2-\alpha /2})\text{ for }0<\alpha
<1.  \label{rate result}
\end{equation}%
In other words, the $\ell _{0}$-penalized ERM classification approach
is risk-consistent even when the dimension of the input feature space ($p$)
grows exponentially in sample size, provided that the number of truly
effective features ($q$) can only grow at a polynomial rate.

We shall provide some further remarks on the convergence rate result (\ref%
{theorem1-expectation}). Condition \ref{sparsity} implies that the space $%
\mathcal{B}_{q}$ contains the Bayes classifier $b^{\ast }$. Thus, if the
value of $q$ were known, one could performed classification via the $\ell
_{0}$-constrained ERM approach where the empirical risk $S_{n}(b)$ is
minimized with respect to $b\in \mathcal{B}_{q}$. The lower bound on the VC
dimension of the classifier space $\mathcal{B}_{q}$ grows at rate $O(q\ln
\left( p/q\right) )$ (\citet[][Lemma 1]{abramovich2017}). Hence, the rate $O(%
\sqrt{n^{-1}q\ln \left( p/q\right) })$ is the minimax optimal rate at which
the excess risk converges to zero under this constrained estimation approach
(\citet[][Theorem 14.5]{DGL:1996}). In view of this, suppose $p$ grows at a
polynomial or exponential rate in $n$. Then our rate result (\ref%
{theorem1-expectation}) is nearly oracle in the sense that, when $q$ grows
at rate $O(\ln n)$, the rate (\ref{theorem1-expectation}) remains close
within some $\ln n$ factor to the optimal rate attained under the case of
known $q$. Moreover, both rates coincide and reduce to $O(\sqrt{n^{-1}\ln p}%
) $ when the value of $q$ does not increase with the sample size.

\section{Computational Algorithms\label{Sec:MIO}}

While the ERM approach to binary classification is theoretically sound, its
implementation is computationally challenging and is known to be an \textit{NP%
} (Non-deterministic Polynomial time) hard problem (\citet{Johnson:78}). %
\citet{Florios:Skouras:08} developed a mixed integer optimization (MIO)
based computational method and provided numerical evidence demonstrating
effectiveness of the MIO approach to solving the ERM type optimization
problems. \citet{kitagawa2018} and \citet{mbakop2018} adopted the MIO
solution approach to solving the optimal treatment assignment problem which
is closely related to the ERM based classification problem. The MIO approach
is also useful for solving problems of variable selection through $\ell _{0}$%
-norm constraints. See \citet{bertsimas2016} and \citet{chen2018} who
proposed MIO based computational algorithms to solving the $\ell _{0}$-constrained regression and classification problems respectively.

Motivated by these previous works, we now present an MIO based computational
method for solving the $\ell _{0}$-penalized ERM problem. Given that $%
Y_{i}\in \{0,1\}$, solving the problem (\ref{penalized ERM}) amounts to
solving%
\begin{equation}
\min\nolimits_{\theta \in \Theta }\frac{1}{n}\sum\nolimits_{i=1}^{n}\left[
Y_{i}-\left( 2Y_{i}-1\right) 1\{X_{1i}+\widetilde{X}_{i}^{\prime }\theta
\geq 0\}\right] +\lambda \left\Vert \theta \right\Vert _{0}.
\label{ERM with penalty}
\end{equation}%
We assume that the parameter space $\Theta $ takes the form 
\begin{equation*}
\Theta =\prod\nolimits_{j=1}^{p}\left[ \underline{c}_{j},\overline{c}_{j}%
\right] ,
\end{equation*}%
where $\underline{c}_{j}$ and $\overline{c}_{j}$ are lower and upper
parameter bounds such that $-\infty <\underline{c}_{j}\leq \theta _{j}\leq 
\overline{c}_{j}<\infty $ for $j\in \{1,...,p\}$.

Our implementation builds on the method of mixed integer optimization.
Specifically, we note that the minimization problem (\ref{ERM with penalty})
can be equivalently reformulated as the following mixed integer linear
programming problem:%
\begin{align}
& \min_{\theta \in \mathbf{\Theta },d_{1},...,d_{n},e_{1},...,e_{p}}\frac{1}{%
n}\sum\nolimits_{i=1}^{n}\left[ Y_{i}-\left( 2Y_{i}-1\right) d_{i}\right]
+\lambda \sum\nolimits_{j=1}^{p}e_{j}  \label{MIO} \\
& \text{subject to}  \notag \\
& \left( d_{i}-1\right) M_{i}\leq X_{1i}+\widetilde{X}_{i}^{\prime }\theta
<d_{i}(M_{i}+\delta ),\text{ }i\in \{1,...n\},  \label{constraint on di} \\
& e_{j}\underline{\theta }_{j}\leq \theta _{j}\leq e_{j}\overline{\theta }%
_{j},\text{ }j\in \{1,...,p\},  \label{selection constraint} \\
& d_{i}\in \{0,1\},\text{ }i\in \{1,...n\},  \label{indicator di} \\
& e_{j}\in \{0,1\},\text{ }j\in \{1,...,p\},  \label{selection indicator ej}
\end{align}%
where $\delta $ is a given small and positive real scalar (e.g. $\delta
=10^{-6}$ as in our numerical study), and 
\begin{equation}
M_{i}\equiv \max_{\theta \in \Theta }\left\vert X_{1i}+\widetilde{X}%
_{i}^{\prime }\theta \right\vert \text{ for }i\in \{1,...,n\}.  \label{Mi}
\end{equation}

We now explain the equivalence between (\ref{ERM with penalty}) and (\ref%
{MIO}). Given $\theta $, the inequality constraints (\ref{constraint on di})
and the dichotomization constraints (\ref{indicator di}) enforce that $%
d_{i}=1\{X_{1i}+\widetilde{X}_{i}^{\prime }\theta \geq 0\}$ for $i\in
\{1,...n\}$. Moreover, the on-off constraints (\ref{selection constraint})
and (\ref{selection indicator ej}) ensure that, whenever $e_{j}=0$, the
value $\theta _{j}$ must also be zero so that the $j$th component of the
feature vector $\widetilde{X}$ is excluded in the resulting $\ell _{0}$-penalized ERM
classifier. The sum $\sum\nolimits_{j=1}^{p}e_{j}$ thus captures the number
of non-zero components of the vector $\theta $. As a result, both
minimization problems (\ref{ERM with penalty}) and (\ref{MIO}) are
equivalent. This equivalence enables us to employ modern MIO solvers to
solve for $\ell _{0}$-penalized ERM classifiers. For implementation, note that the
values $\left( M_{i}\right) _{i=1}^{n}$ in the inequality constraints (\ref%
{constraint on di}) can be computed by formulating the maximization problem
in (\ref{Mi}) as linear programming problems, which can be efficiently
solved by modern numerical solvers. Hence these values can be easily
computed and stored as inputs to the MILP problem (\ref{MIO}).

\section{Simulation Study\label{Sec:Simulation}}

In this section, we conduct simulation experiments to study the performance
of our approach. We consider a
simulation setup similar to that of \citet[][Section 5]{chen2018} and use
the following data generating design. Let $V=(V_{1},...,V_{p})$ be a
multivariate normal random vector with mean zero and covariance matrix $%
\Sigma $ with its element $\Sigma _{i,j}=\left( 0.25\right) ^{\left\vert
i-j\right\vert }$. The binary outcome is generated according to the
following specification:%
\begin{equation*}
Y=1\{X_{1}+\widetilde{X}^{\prime }\theta ^{\ast }\geq \sigma (X)\xi \},
\end{equation*}%
where $\theta ^{\ast }$ denotes the true data generating parameter value, $%
X=(X_{1},\widetilde{X})$ is a $(p+1)$ dimensional feature vector with $%
X_{1}=V_{1}$ and $\widetilde{X}=(1,V_{2},...,V_{p})$, and $\xi $ is a random
variate that is independent of $V$ and follows the standard logistic
distribution. The constant term in $\widetilde{X}$ is included to capture
the regression intercept. We set $\theta _{1}^{\ast }=0$ and $\theta
_{j}^{\ast }=0$ for $j\in \{3,...,p\}.$ The coefficient $\theta _{2}^{\ast }$
is chosen to be non-zero such that, among all the feature variables in $%
\widetilde{X}$, only the variable $\widetilde{X}_{2}=V_{2}$ is relevant in
the data generating processes (DGP). We consider the following two
specifications for $\theta _{2}^{\ast }$ and $\sigma (X)$: 
\begin{eqnarray*}
&&\text{DGP(i) : }\theta _{2}^{\ast }=-0.55\text{ and }\sigma (X)=0.2. \\
&&\text{DGP(ii) : }\theta _{2}^{\ast }=-1.85\text{ and }\sigma (X)=0.2\left(
1+2\left( V_{1}+V_{2}\right) ^{2}+\left( V_{1}+V_{2}\right) ^{4}\right) .
\end{eqnarray*}%
We used $100$ simulation repetitions in each Monte Carlo experiment. For
each simulation repetition, we generated a training sample of $n=100$
observations for estimating the coefficients $\theta $ and a validation
sample of $5000$ observations for evaluating the out-of-sample
classification performance. We considered simulation configurations with $%
p\in \{10,200\}$ to assess the classifier's performance in both the low and
high dimensional binary classification problems.

We specified the parameter space $\Theta $ to be $[-10,10]^{p}$ for the MIO
computation of the $\ell _{0}$-penalized ERM classifiers. Throughout this paper, we
used the MATLAB implementation of the Gurobi Optimizer to solve the MIO
problems (\ref{MIO}). Moreover, all numerical computations were done on a
desktop PC (Windows 7) equipped with 32 GB RAM and a CPU processor (Intel
i7-5930K) of 3.5 GHz. To reduce computation cost of solving the $\ell _{0}$%
-penalized ERM problems, we set the MIO solver time limit to be one
hour beyond which we forced the solver to stop early and used the best
discovered feasible solution to construct the resulting $\ell _{0}$-penalized ERM
classifier. For implementation, it remains to specify an exact form of the
penalty parameter $\lambda $ in (\ref{penalized ERM}). We set 
\begin{equation}
\lambda =v\left[ \ln \ln \left( p\vee n\right) \right] \sqrt{\ln \left(
p\vee n\right) /n},  \label{tuning parameter specification}
\end{equation}%
where $v$ is a tuning constant which remains to be calibrated. The form (\ref%
{tuning parameter specification}) implies that the value $c$ in Condition %
\ref{lamda} is taken to be $v\ln \ln (p\vee n)$, which will satisfy
inequality (\ref{condition on c}) and hence validate the probability bound (%
\ref{bound on the l0 norm of theta_hat}) when $p\vee n$ is sufficiently
large. Moreover, by the risk upper bound (\ref{bound on the mean excess risk}%
), the convergence rate result (\ref{theorem1-expectation}) continues to
hold up to a factor of $\ln \ln \left( p\vee n\right) $.

For practical applications, we recommend calibrating the tuning scalar $v$
via the method of cross validation. Yet, for this simulation study, we used
a simple heuristic rule and set%
\begin{eqnarray}
v &\equiv &h\left( 1-h\right) ,  \label{v} \\
h &\equiv &\min_{t\in \lbrack -10,10]}\frac{1}{n}\sum\nolimits_{i=1}^{n}1%
\left\{ Y_{i}\neq 1\left\{ X_{1i}+t\geq 0\right\} \right\} .  \label{h}
\end{eqnarray}%
The value $h$ in (\ref{h}) can also be computed via the MIO approach
by simply removing from the MIO problem (\ref{MIO}) the constraints (\ref%
{selection constraint}) and (\ref{selection indicator ej}) as well as the
binary controls $\left( e_{1},..,e_{p}\right) $ and the penalty part in the
objective function. 
This computation is much faster as it is concerned with
one-dimensional optimization. The rationale behind the choice (\ref{v}) is
as follows. Note that (\ref{h}) corresponds to an ERM classification using
classifier (\ref{predictor}) where $\widetilde{X}$ only consists of the
intercept term, and (\ref{v}) corresponds to an estimate of the variance of
the misclassification loss under such a simple classification rule.
Intuitively speaking, the value $v$ captures the variability of the
empirical risk under a parsimonious feature space specification. From the
bias and variance tradeoff perspective, when this variability is small, we
may as well increase the classifier flexibility by attaching a small penalty
in the penalized ERM procedure so as to induce a richer set of selected
features for classification.

Let logit\_lasso denote the $\ell _{1}$-penalized logistic regression
approach \citep[see
e.g.][]{friedman2010}. The logit\_lasso estimation approach is a
computationally attractive approach that can be used to estimate high
dimensional binary response models. We compared in simulations the
performance of our method to that of the
logit\_lasso approach. We used the MATLAB implementation of the well known 
\textbf{glmnet} computational algorithms \citep{Qian2013} for solving the
logit\_lasso estimation problems. 
We did not penalize the coefficient of the feature variable $X_{1}$ so that,
as in the simulations of the $\ell _{0}$-penalized ERM approach, this
variable would always be included in the resulting classifier constructed
under the logit\_lasso approach. We calibrated the lasso penalty parameter
value over a sequence of 100 values via the 10-fold cross validation
procedure. We used the default setup of \textbf{glmnet} for constructing
this tuning sequence among which we reported results based on the following
two choices, $\left\{ \lambda _{opt}^{lasso},\lambda _{1se}^{lasso}\right\} $%
, of the penalty parameter value. The value $\lambda _{opt}^{lasso}$ refers
to the lasso penalty parameter value that minimized the cross validated
misclassification risk, whereas $\lambda _{1se}^{lasso}$ denotes the largest
penalty parameter value whose corresponding cross validated
misclassification risk still falls within the one standard error of the
cross validated misclassification risk evaluated at $\lambda _{opt}^{lasso}$%
. The choice $\lambda _{1se}^{lasso}$ induces a more parsimonious estimating
model and is known as the "one-standard-error" rule, which is also commonly
employed in the statistical learning literature \citep{hastie2009}.

We considered the following performance measures. Let $\widehat{\theta }$
denote the estimated coefficients under a given classification approach.
For the logit\_lasso approach, we derived $\widehat{\theta }$ by dividing
the lasso-penalized logistic regression coefficients of the variables $%
\widetilde{X}$ by the magnitude of that of the variable $X_{1}$.
We can
easily deduce that $b_{\theta ^{\ast }}(X)=1\{X_{1}+\widetilde{X}^{\prime
}\theta ^{\ast }\geq 0\}$ is the Bayes classifier in this simulation design.
To assess the classification performance, we report the relative risk, which
is the ratio of the misclassification risk evaluated at the classifier $b_{%
\widehat{\theta }}$ over that evaluated at the Bayes classifier. In each
simulation repetition, we approximated the out-of-sample misclassification
risk using the generated validation sample. Let $in\_RR$ and $out\_RR$
respectively denote the average of in-sample and that of out-of-sample
relative risks over all the simulation repetitions.

We also examine the feature selection performance of the classification
method. We say that a feature variable $\widetilde{X}_{j}$ is effectively
selected if and only if the magnitude of $\widehat{\theta }_{j}$ is larger
than a small tolerance level (e.g., $10^{-6}$ as used in our numerical
study) which is distinct from zero in numerical computation. Let $Corr\_sel$
be the proportion of the variable $\widetilde{X}_{2}$ being effectively
selected. Let $Orac\_sel$ be the proportion of obtaining an oracle feature
selection outcome where the variable $\widetilde{X}_{2}$ was the only one
that was effectively selected among all the variables in $\widetilde{X}$.
Let $Num\_irrel$ denote the average number of effectively selected features
whose true DGP coefficients are zero.

\subsection{Simulation Results}

We now present in Tables \ref{tab1} and \ref{tab2} the simulation results
under the setups of DGP(i) and DGP(ii) respectively. From these two tables,
we find that, regarding the in-sample classification performance, our method outperformed the two logit\_lasso based
approaches across almost all the DGP configurations in the simulation. For
the out-of-sample classification performance, we see that the $\ell _{0}$%
-penalized ERM classifier dominated the logit\_lasso classifiers across all simulation
scenarios and this dominance was more evident in the high dimensional setup
with $p=200$.

\begin{table}[tbph]
\caption{Comparison of classification methods under DGP(i)}
\label{tab1}
\begin{center}
\begin{tabular}{c||ccc||ccc}
\hline\hline
& \multicolumn{3}{|c||}{${\small p=10}$} & \multicolumn{3}{|c}{${\small p=200%
}$} \\ \hline
{\small method} & $\ell _{0}${\small -ERM} & \multicolumn{2}{c||}{\small %
logit\_lasso} & $\ell _{0}${\small -ERM} & \multicolumn{2}{c}{\small %
logit\_lasso} \\ 
&  & ${\small \lambda }_{opt}^{lasso}$ & ${\small \lambda }_{1se}^{lasso}$ & 
& ${\small \lambda }_{opt}^{lasso}$ & ${\small \lambda }_{1se}^{lasso}$ \\ 
\hline
${\small Corr\_sel}$ & 0.98 & 1 & 0.99 & 0.94 & 0.99 & 0.87 \\ 
${\small Orac\_sel}$ & 0.95 & 0 & 0 & 0.83 & 0 & 0 \\ 
${\small Num\_irrel}$ & 0.03 & 3.05 & 1.45 & 0.15 & 7.26 & 2.62 \\ 
${\small in\_RR}$ & 0.828 & 0.870 & 1.134 & 0.778 & 0.843 & 1.237 \\ 
${\small out\_RR}$ & 1.094 & 1.168 & 1.304 & 1.139 & 1.313 & 1.471 \\ \hline
\end{tabular}%
\end{center}
\end{table}

\begin{table}[tbph]
\caption{Comparison of classification methods under DGP(ii)}
\label{tab2}
\begin{center}
\begin{tabular}{c||ccc||ccc}
\hline\hline
& \multicolumn{3}{|c||}{${\small p=10}$} & \multicolumn{3}{|c}{${\small p=200%
}$} \\ \hline
{\small method} & $\ell _{0}${\small -ERM} & \multicolumn{2}{c||}{\small %
logit\_lasso} & $\ell _{0}${\small -ERM} & \multicolumn{2}{c}{\small %
logit\_lasso} \\ 
&  & $\lambda _{opt}^{lasso}$ & ${\small \lambda }_{1se}^{lasso}$ &  & $%
\lambda _{opt}^{lasso}$ & ${\small \lambda }_{1se}^{lasso}$ \\ \hline
${\small Corr\_sel}$ & 0.91 & 1 & 0.91 & 0.83 & 0.95 & 0.84 \\ 
${\small Orac\_sel}$ & 0.91 & 0 & 0 & 0.82 & 0 & 0 \\ 
${\small Num\_irrel}$ & 0.01 & 2.99 & 1.48 & 0.05 & 9.86 & 2.72 \\ 
${\small in\_RR}$ & 0.893 & 0.969 & 1.095 & 0.884 & 0.804 & 1.069 \\ 
${\small out\_RR}$ & 1.071 & 1.160 & 1.248 & 1.103 & 1.271 & 1.289 \\ \hline
\end{tabular}%
\end{center}
\end{table}

Concerning the feature selection results, both Tables \ref{tab1} and \ref%
{tab2} indicate that all the three classification approaches had high $%
Corr\_sel$ rates and hence were effective for selecting the relevant
variable $\widetilde{X}_{2}$. However, the good performance in the $%
Corr\_sel $ criterion might just be a consequence of overfitting, which may
result in excessive selection of irrelevant variables and thus adversely
impact on the out-of-sample classification performance. From the results on
the $Num\_irrel $ performance measure, we note that the numbers of
irrelevant variables selected under the two logit\_lasso based approaches
remained quite large relatively to those under the $\ell _{0}$-penalized ERM approach even though all these approaches exhibited the effect
of shrinking the feature space dimension. In fact, we observe non-zero and
high values of $Orac\_Sel$ for the $\ell _{0}$-classifier across all the
simulation setups whereas the two logit\_lasso classifiers could not induce
any oracle variable selection outcome in the simulation. These feature
selection performance results help to explain that the risk performance
dominance of the $\ell _{0}$-penalized ERM approach could be observed
even in the DGP(i) simulations where the logistic regression model was
correctly specified.

\section{Conclusions\label{Sec:Conclusions}}

In this paper, we study the binary classification problem in a setting with
high dimensional vectors of features. We construct a binary classification
procedure by minimizing the empirical misclassification risk with a penalty
on the number of selected features. We establish a finite-sample probability
bound showing that this classification approach can yield a sparse solution
for feature selection with high probability. We also conduct non-asymptotic
analysis on the excess misclassification risk and establish its rate of
convergence. For implementation, we show that the penalized empirical risk
minimization problem can be solved via the method of mixed integer linear
programming.

There are a few topics one may consider as possible extensions. First, it might be fruitful to explore an $\ell_0$-penalized approach for regression  and other estimation problems.  Second,  our proposed method is suitable for training samples with small or moderate size. It would be a natural step to develop a divide-and-conquer algorithm for a large-scale problem \citep[see, e.g.,][]{Shi-et-al:2018}.
Third, our approach might be applicable for developing sparse policy learning rules \citep[see, e.g.,][]{athey2017efficient}.
These are topics for further research.

\appendix%

\section{Proofs of Theoretical Results\label{Sec:Appendix}}

We shall use the following lemma in the proofs of Theorems \ref{sparsity
bound} and \ref{convergence rate}.

\begin{lemma}
\label{tail bound}For all $\sigma >0$, there is a universal constant $%
M_{\sigma }$, which depends only on $\sigma $, such that%
\begin{equation}
P\left( \sup\limits_{b\in \mathcal{B}_{k}}\left\vert
S_{n}(b)-S(b)\right\vert >\sqrt{\frac{M_{\sigma }k\ln (p\vee n)}{n}}\right)
\leq e^{-\sigma k\ln (p\vee n)}  \label{bound on the tail probability}
\end{equation}%
for any integer $k\in \{1,...,p\}$ such that 
\begin{equation}
4\left( k+1\right) \ln \left( M_{\sigma }k\ln (p\vee n)\right) \leq k\ln
(p\vee n)+6\left( k+1\right) \ln 2.  \label{ineq1}
\end{equation}
\end{lemma}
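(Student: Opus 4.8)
The plan is to treat $\sup_{b\in\mathcal{B}_k}|S_n(b)-S(b)|$ as a uniform deviation of empirical means from population means over a class of $\{0,1\}$-valued functions, and to control it by Vapnik--Chervonenkis (VC) theory. Writing the loss as $\ell_b(Y,X)\equiv 1\{Y\neq b(X)\}$, the collection $\{\ell_b:b\in\mathcal{B}_k\}$ is a class of indicator functions; since each $\ell_b$ is a fixed transformation of $b$ at the given labels, the number of distinct $\{0,1\}$-patterns this loss class produces on $n$ points is no larger than the shatter coefficient $\mathcal{S}(\mathcal{B}_k,n)$ of the classifier class itself. The argument then reduces to two pieces: (i) a combinatorial bound on $\mathcal{S}(\mathcal{B}_k,n)$, and (ii) an off-the-shelf exponential inequality converting that complexity bound into a tail bound.

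For step (i), I would decompose $\mathcal{B}_k$ according to the support of $\theta$. There are at most $\binom{p}{k}$ subsets of size at most $k$ that can serve as the active coordinates, and on each fixed support the classifiers have the linear-threshold form $1\{x_1+\widetilde{x}^{\prime}\theta\geq 0\}$ with at most $k$ free coefficients. Because such rules are sign patterns of functions lying in the vector space spanned by $x_1$ together with the $k$ selected coordinates of $\widetilde{x}$, a space of dimension at most $k+1$, their VC dimension is at most $k+1$; hence by Sauer's lemma the number of distinct labelings they induce on $n$ points is at most $\sum_{i=0}^{k+1}\binom{n}{i}\leq (n+1)^{k+1}$. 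Multiplying the per-support count by the number of supports yields
\begin{equation*}
\mathcal{S}(\mathcal{B}_k,n)\leq \binom{p}{k}(n+1)^{k+1},
\end{equation*}
so that $\ln\mathcal{S}(\mathcal{B}_k,n)\lesssim (2k+1)\ln(p\vee n)$ up to additive multiples of $(k+1)\ln 2$.

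For step (ii), I would invoke the VC inequality (e.g. \citet[Theorem 12.5]{DGL:1996}), which gives, for every $\epsilon>0$,
\begin{equation*}
P\left(\sup_{b\in\mathcal{B}_k}|S_n(b)-S(b)|>\epsilon\right)\leq 8\,\mathcal{S}(\mathcal{B}_k,n)\,e^{-n\epsilon^2/32}.
\end{equation*}
Setting $\epsilon=\sqrt{M_{\sigma}k\ln(p\vee n)/n}$ makes the exponent equal to $-M_{\sigma}k\ln(p\vee n)/32$, so it remains to verify that $8\,\mathcal{S}(\mathcal{B}_k,n)\leq \exp\{(M_{\sigma}/32-\sigma)k\ln(p\vee n)\}$. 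Taking logarithms and inserting the bound from step (i), this reduces to requiring that the dominant term $k\ln(p\vee n)$ absorb the constant $\ln 8$ together with the $(k+1)$-fold logarithmic corrections coming from the Sauer polynomial; this is precisely the role of the stated inequality (\ref{ineq1}), and it is where the universal constant $M_{\sigma}$, depending only on $\sigma$, is taken large enough to dominate those residual terms.

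The main obstacle I anticipate is not the high-level strategy but the bookkeeping in step (i) combined with the calibration in step (ii): one must pin down the effective dimension, hence the VC dimension, of the sparse linear-threshold family correctly, and then track every constant and every $(k+1)\ln(\cdot)$ correction so that the right-hand side collapses cleanly to $e^{-\sigma k\ln(p\vee n)}$ exactly under condition (\ref{ineq1}), with $M_{\sigma}$ depending on $\sigma$ alone and not on $k$, $n$, or $p$.
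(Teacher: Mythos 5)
Your proposal follows essentially the same route the paper relies on: the paper omits the proof of Lemma \ref{tail bound} and attributes it to a straightforward modification of Theorem 1 of \citet{chen2018}, whose argument is exactly what you describe --- reduce the loss class to the classifier class, decompose $\mathcal{B}_{k}$ over the $\binom{p}{k}$ candidate supports, bound the VC dimension of each fixed-support linear-threshold subclass by $k+1$ via the linear-span argument, apply Sauer's lemma, and feed the resulting shatter-coefficient bound into a VC-type exponential inequality. The one substantive discrepancy is your treatment of condition (\ref{ineq1}). With the inequality you invoke, $P(\sup_{b\in\mathcal{B}_k}|S_n(b)-S(b)|>\epsilon)\leq 8\,\mathcal{S}(\mathcal{B}_k,n)e^{-n\epsilon^{2}/32}$ from \citet[Theorem 12.5]{DGL:1996}, the residual terms to be absorbed are $\ln 8+\ln\binom{p}{k}+(k+1)\ln(n+1)$, each bounded by a universal constant multiple of $k\ln(p\vee n)$ once $p\vee n\geq 2$, so your calibration closes by simply taking $M_{\sigma}\geq 32(\sigma+C)$ for a universal $C$; condition (\ref{ineq1}) is then neither needed nor is it \emph{precisely} the inequality your bookkeeping produces. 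The term $\ln\left(M_{\sigma}k\ln(p\vee n)\right)=\ln(n\epsilon^{2})$ appearing in (\ref{ineq1}) signals that the source proof uses a refined VC inequality whose polynomial prefactor is of order $(n\epsilon^{2})$ raised to a multiple of $k+1$, rather than the cruder $n^{k+1}$ factor from Sauer's lemma at sample size $n$; (\ref{ineq1}) is the condition under which \emph{that} prefactor is dominated by the exponential. Since the lemma only asserts the bound for $k$ satisfying (\ref{ineq1}), your argument --- which, once the closing inequality is written out and verified, delivers the bound for all $k\in\{1,\dots,p\}$ with a possibly different universal $M_{\sigma}$ --- proves a statement at least as strong; you should simply carry out that final display explicitly rather than attributing its role to (\ref{ineq1}).
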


The proof of Lemma \ref{tail bound} is a straightforward modification of
that of Theorem 1 of \citet{chen2018}. Hence we omit this proof here.

\subsection{Proof of Theorem \protect\ref{sparsity bound}}

\begin{proof}[Proof of Theorem \protect\ref{sparsity bound}]
We first prove the probability bound (\ref{bound on the l0 norm of theta_hat}%
). Let $\theta ^{\ast }\equiv \arg \inf_{\theta \in \Theta _{q}}S(b_{\theta
})$. Because $b^{\ast }\in \mathcal{B}_{q}$, it is straightforward to see
that%
\begin{eqnarray}
U_{n} &=&\left[ S(b_{\widehat{\theta }})-S_{n}(b_{\widehat{\theta }%
})-\lambda \left\Vert \widehat{\theta }\right\Vert _{0}\right] +\left[
S_{n}(b_{\widehat{\theta }})+\lambda \left\Vert \widehat{\theta }\right\Vert
_{0}-S(b^{\ast })\right]  \notag \\[1pt]
&\leq &\left[ S(b_{\widehat{\theta }})-S_{n}(b_{\widehat{\theta }})-\lambda
\left\Vert \widehat{\theta }\right\Vert _{0}\right] +\left[ S_{n}(b^{\ast
})+\lambda \left\Vert \theta ^{\ast }\right\Vert _{0}-S(b^{\ast })\right] 
\notag \\[1pt]
&\leq &\left\vert S_{n}(b_{\widehat{\theta }})-S(b_{\widehat{\theta }%
})\right\vert +\sup\nolimits_{b\in \mathcal{B}_{q}}\left\vert
S_{n}(b)-S(b)\right\vert +\lambda q-\lambda \left\Vert \widehat{\theta }%
\right\Vert _{0}.  \label{a}
\end{eqnarray}%
Since $U_{n}\geq 0$, it follows from (\ref{a}) that%
\begin{equation}
\left\Vert \widehat{\theta }\right\Vert _{0}\leq q+2\lambda ^{-1}\Delta
_{n}(\left\Vert \widehat{\theta }\right\Vert _{0}\vee q)
\label{theta_hat L0 norm upper bound}
\end{equation}%
where, for any $k\geq 0$, 
\begin{equation}
\Delta _{n}(k)\equiv \sup\nolimits_{b\in \mathcal{B}_{k}}\left\vert
S_{n}(b)-S(b)\right\vert .  \label{sup difference}
\end{equation}%
By construction, $0\leq S_{n}(b)\leq 1$ for any indicator function $b$. We
thus have that%
\begin{equation}
\lambda \left\Vert \widehat{\theta }\right\Vert _{0}\leq S_{n}(b_{\widehat{%
\theta }})+\lambda \left\Vert \widehat{\theta }\right\Vert _{0}\leq 1,
\label{estimated sparsity bound}
\end{equation}%
where the second inequality above follows by evaluating the objective
function in (\ref{penalized ERM}) at the $\theta $ vector whose components
are all zero. By (\ref{estimated sparsity}), we can deduce that 
\begin{equation}
\left\Vert \widehat{\theta }\right\Vert _{0}\leq p\wedge \left\lfloor
\lambda ^{-1}\right\rfloor .  \label{initial upper bound}
\end{equation}%
Note that, by (\ref{theta_hat L0 norm upper bound}), (\ref{initial upper
bound}) and (\ref{m0}), 
\begin{equation}
\left\Vert \widehat{\theta }\right\Vert _{0}\leq q+2\lambda ^{-1}\Delta
_{n}(m_{0}).  \label{theta_hat initial upper bound}
\end{equation}

Given $\sigma >0$, let $M_{\sigma }$ be the universal constant stated in
Lemma \ref{tail bound}. Let $\delta \equiv 2c^{-1}\sqrt{M_{\sigma }}$ and $c$
is the constant specified in Condition \ref{lamda}. Suppose $c$ is
sufficiently large such that inequality (\ref{condition on c}) holds. Since $%
\epsilon \in (0,1)$, we have that 
\begin{equation}
\delta \leq \epsilon \left( 1+\epsilon \right) ^{-1}<1.
\label{inequality for delta}
\end{equation}

For each positive integer $j$, let 
\begin{equation}
m_{j}\equiv q+\delta \sqrt{m_{j-1}}.  \label{m_j}
\end{equation}%
Given that $q\geq 1$, by (\ref{m_j}), we have that%
\begin{eqnarray*}
m_{j} &\leq &q+\delta m_{j-1} \\
&\leq &q+\delta q+\delta ^{2}m_{j-2} \\
&&.... \\
&\leq &q\frac{1-\delta ^{j}}{1-\delta }+\delta ^{j}m_{0}.
\end{eqnarray*}%
Therefore, by (\ref{j0}), we have that, for all $j\geq j_{0}$,%
\begin{equation}
m_{j}\leq q\frac{1-\delta ^{j}}{1-\delta }+\epsilon \leq s,
\label{bound on mj}
\end{equation}%
where the second inequality follows from (\ref{inequality for delta}).

By (\ref{theta_hat L0 norm upper bound}), we have that%
\begin{eqnarray*}
P\left( \left\Vert \widehat{\theta }\right\Vert _{0}\leq m_{j}\right) &\leq
&P\left( \left\Vert \widehat{\theta }\right\Vert _{0}\leq q+2\lambda
^{-1}\Delta _{n}(m_{j})\right) \\
&\leq &P\left( \left\Vert \widehat{\theta }\right\Vert _{0}\leq q+2\lambda
^{-1}\Delta _{n}(m_{j}),\Delta _{n}(m_{j})\leq \lambda c^{-1}\sqrt{M_{\sigma
}m_{j}}\right) \\
&&+P\left( \Delta _{n}(m_{j})>\lambda c^{-1}\sqrt{M_{\sigma }m_{j}}\right) \\
&\leq &P\left( \left\Vert \widehat{\theta }\right\Vert _{0}\leq
m_{j+1}\right) +P\left( \Delta _{n}(m_{j})>\lambda c^{-1}\sqrt{M_{\sigma
}m_{j}}\right) .
\end{eqnarray*}%
Hence 
\begin{equation}
P\left( \left\Vert \widehat{\theta }\right\Vert _{0}>m_{j+1}\right) \leq
P\left( \left\Vert \widehat{\theta }\right\Vert _{0}>m_{j}\right) +P\left(
\Delta _{n}(m_{j})>\lambda c^{-1}\sqrt{M_{\sigma }m_{j}}\right) .  \label{c}
\end{equation}%
By (\ref{initial upper bound}), $P\left( \left\Vert \widehat{\theta }%
\right\Vert _{0}>m_{0}\right) =0$. Using this fact and applying (\ref{c})
recursively, we have that%
\begin{equation}
P\left( \left\Vert \widehat{\theta }\right\Vert _{0}>m_{k}\right) \leq
\sum\nolimits_{j=0}^{k-1}P\left( \Delta _{n}(m_{j})>\lambda c^{-1}\sqrt{%
M_{\sigma }m_{j}}\right) .  \label{probability bound}
\end{equation}%
Therefore, result (\ref{bound on the l0 norm of theta_hat}) of Theorem \ref%
{sparsity bound} follows by noting that%
\begin{eqnarray}
&&P\left( \left\Vert \widehat{\theta }\right\Vert _{0}>s\right)  \notag \\
&\leq &\sum\nolimits_{i=0}^{j_{0}-1}P\left( \Delta _{n}(m_{i})>\lambda c^{-1}%
\sqrt{M_{\sigma }m_{i}}\right)  \label{a1} \\
&\leq &\sum\nolimits_{i=0}^{j_{0}-1}P\left( \Delta _{n}(\left\lfloor
m_{i}\right\rfloor \wedge p)>\lambda c^{-1}\sqrt{M_{\sigma }\left(
\left\lfloor m_{i}\right\rfloor \wedge p\right) }\right)  \label{a2} \\
&\leq &\sum\nolimits_{i=0}^{j_{0}-1}e^{-\sigma \left( \left\lfloor
m_{i}\right\rfloor \wedge p\right) \ln (p\vee n)}  \label{a3} \\
&\leq &j_{0}e^{-\sigma r_{n}},  \label{a4}
\end{eqnarray}%
where (\ref{a1}) follows from (\ref{bound on mj}) and (\ref{probability
bound}), (\ref{a2}) follows from the fact that $r\geq \left\lfloor
r\right\rfloor \wedge p$ and $\Delta _{n}(r)=\Delta _{n}(\left\lfloor
r\right\rfloor \wedge p)$ for all $r\geq 0$, and, because $q\leq m_{i}\leq
m_{0}\vee \left( \left( j_{0}-1\right) q+\sqrt{m_{0}}\right) $ for $i\in
\{0,1,2,...,j_{0}-1\}$, (\ref{a3}) follows from an application of Lemma \ref%
{tail bound}, where the value of $k$ in this lemma is taken over the range $%
\{q,q+1,...,$ $\left[ m_{0}\vee \left( \left( j_{0}-1\right) q+\left\lfloor 
\sqrt{m_{0}}\right\rfloor \right) \right] \wedge p\}$.
\end{proof}

\subsection{Proof of Theorem \protect\ref{convergence rate}}

\begin{proof}
We exploit the proof of Theorem \ref{sparsity bound} to show the probability
bound (\ref{bound on excess risk}). Specifically, using\ (\ref{a}) and (\ref%
{sup difference}) and noting that $s\geq q$, we have that%
\begin{equation*}
U_{n}\leq 2\Delta _{n}(\left\Vert \widehat{\theta }\right\Vert _{0}\vee
q)+\lambda s.
\end{equation*}%
Hence%
\begin{eqnarray}
P\left( U_{n}>3\lambda s\right) &\leq &P\left( \Delta _{n}(\left\Vert 
\widehat{\theta }\right\Vert _{0}\vee q)>\lambda s,\left\Vert \widehat{%
\theta }\right\Vert _{0}\leq s\right) +P\left( \left\Vert \widehat{\theta }%
\right\Vert _{0}>s\right)  \notag \\
&\leq &P\left( \Delta _{n}(\left\lfloor s\right\rfloor )>\lambda \sqrt{%
\left\lfloor s\right\rfloor }\right) +j_{0}e^{-\sigma r_{n}}  \label{b1} \\
&\leq &P\left( \Delta _{n}(\left\lfloor s\right\rfloor )>c^{-1}\sqrt{%
M_{\sigma }}\lambda \sqrt{\left\lfloor s\right\rfloor }\right)
+j_{0}e^{-\sigma r_{n}}  \label{b2} \\
&\leq &e^{-\sigma \left\lfloor s\right\rfloor \left( p\vee n\right)
}+j_{0}e^{-\sigma r_{n}}  \label{b3} \\
&\leq &\left( 1+j_{0}\right) e^{-\sigma r_{n}},  \label{b4}
\end{eqnarray}%
where (\ref{b1}) follows from the probability bound (\ref{bound on the l0
norm of theta_hat}) of Theorem \ref{sparsity bound}, (\ref{b2}) follows from
(\ref{condition on c}), which implies $c>\sqrt{M_{\sigma }}$, (\ref{b3})
follows from Lemma \ref{tail bound}, and (\ref{b4}) follows from the fact
that $q\leq s$.
\end{proof}

{\singlespacing
\bibliographystyle{econometrica}
\bibliography{BSBP}
}

\end{document}